\documentclass[letterpaper,twocolumn]{IEEEAerospaceCLS}  
\usepackage{style}

\usepackage[]{graphicx}    
\newcommand{\ignore}[1]{}  

\usepackage{graphicx}
\usepackage{amsmath}
\usepackage[version=4]{mhchem}
\usepackage{siunitx}
\usepackage{longtable,tabularx}
\usepackage{indentfirst}
\setlength\LTleft{0pt} 
\usepackage{float}

\begin{document}
\title{Information-Optimal Multi-Spacecraft Positioning for Interstellar Object Exploration}

\author{%
\large{Arna Bhardwaj$^*$~~\quad Shishir Bhatta$^*$~~\quad Hiroyasu Tsukamoto$^\dag$}\vspace{0.7em}\\
\normalsize{Department of Aerospace Engineering, University of Illinois at Urbana-Champaign, Urbana, IL}
\thanks{$^*$ Undergraduate Researcher, Department of Aerospace Engineering, University of Illinois at Urbana-Champaign, Urbana, IL, {\tt\footnotesize\{\href{mailto:arna2@illinois.edu}{arna2}, \href{mailto:sb74@illinois.edu}{sb74}\}@illinois.edu}. Assistant Professor, Department of Aerospace Engineering, University of Illinois at Urbana-Champaign, Urbana, IL, {\tt\footnotesize\href{mailto:hiroyasu@illinois.edu}{hiroyasu@illinois.edu}}. Part of the research was carried out at the Jet Propulsion Laboratory, California Institute of Technology, under a contract with the National Aeronautics and Space Administration.}}

\maketitle

\thispagestyle{titlepagefancy}
\pagestyle{myfancy}

\begin{abstract}
Interstellar objects (ISOs), astronomical objects not gravitationally bound to the sun, could present valuable opportunities to advance our understanding of the universe's formation and composition. In response to the unpredictable nature of their discoveries that inherently come with large and rapidly changing uncertainty in their state, this paper proposes a novel multi-spacecraft framework for locally maximizing information to be gained through ISO encounters with formal probabilistic guarantees. Given some approximated control and estimation policies for fully autonomous spacecraft operations, we first construct an ellipsoid around its terminal position, where the ISO would be located with a finite probability. The large state uncertainty of the ISO is formally handled here through the hierarchical property in stochastically contracting nonlinear systems. We then propose a method to find the terminal positions of the multiple spacecraft optimally distributed around the ellipsoid, which locally maximizes the information we can get from all the points of interest (POIs). This utilizes a probabilistic information cost function that accounts for spacecraft positions, camera specifications, and ISO position uncertainty, where the information is defined as visual data collected by cameras. Numerical simulations demonstrate the efficacy of this approach using synthetic ISO candidates generated from quasi-realistic empirical populations. Our method allows each spacecraft to optimally select its terminal state and determine the ideal number of POIs to investigate, potentially enhancing the ability to study these rare and fleeting interstellar visitors while minimizing resource utilization.


%

\end{abstract}

\tableofcontents

\section{Introduction}
Interstellar objects (ISOs), astronomical objects traveling through space without any attachment to any system, present a unique opportunity to study various aspects of the universe, such as the formation and composition of other star systems, the origin of the universe, and possibly other forces at play in the expanses of space \cite{bannister_natural_2019}. For example, one of the two ISOs observed to date~\cite{oumuamua,guzick-2020}, named 1I/`Oumuamua, visited us from the rough direction of the constellation Lyra exhibiting some unusual physical characteristics (a highly elongated shape, a lack of typical cometary volatiles, and a deviation from Keplerian trajectories~\cite{Seligman_2021,dybczynski_investigating_2018}), which might provide some clues into how our solar system and neighboring exoplanetary star systems are formed. Exploring these interstellar visitors in situ with dedicated spacecraft would offer an alternative means to acquire firsthand knowledge about interstellar space, which goes beyond what is obtained through remote observations via telescopes.

Due to the hyperbolic nature of their orbit, ISOs only pass through the solar system once in their lifetime with inherently large and rapidly changing uncertainty in their state, often at high inclination and relative velocities~\cite{my_iso_mission}. This poses significant challenges in designing a guidance, navigation, and control (GNC) strategy for the ISO encounter, requiring fast response autonomous operations even with the limited computational capacity of spacecraft. These would involve some levels of approximations in the GNC policies for the sake of their onboard execution, which then introduces additional difficulties in formally quantifying the probability of a successful encounter. Furthermore, theoretical connections between this probability and our mission objective of maximizing the scientific return through the encounter (which we characterize by the amount of visual data collected by spacecraft cameras) still remain ambiguous.
\subsection{Contributions}
This paper proposes a novel multi-spacecraft framework for locally maximizing the amount of information obtained through ISO encounters, with some probabilistic guarantees. Building on our previous work~\cite{myiso}, we first derive a formal upper bound on the failure probability of the ISO encounter for hierarchical It\^{o} stochastic nonlinear system, which models our spacecraft relative dynamics in the local-vertical local-horizontal (LVLH) frame~\cite[pp. 710-712]{lvlhbook} centered on the ISO with the large uncertainty in its state measurement. In order to achieve fast response autonomous operations even with the limited onboard computational capacity of the spacecraft, potential approximation errors in their control and estimation policies are explicitly considered when deriving the bound. The failure probability then enables the construction of an ellipsoid around the terminal position of the chief spacecraft, in which the ISO would be located at the terminal time with a finite probability. We further propose an approach to finding the terminal positions of the deputy spacecraft with respect to the chief spacecraft, optimally distributed around the ellipsoid to maximize the information we get from all the points of interest (POIs) in it. In particular, we utilize an information cost function characterized by the distance between the center of the ellipsoid and each spacecraft in the swarm and the visual coverage of the POIs, accounting for spacecraft positions and attitudes, camera specifications, and ISO position uncertainty, where the information here is defined as imaging data collected by spacecraft onboard cameras. Since this optimization only requires an upper bound for the ISO state uncertainty history over time, which we could obtain as in our previous work~\cite{my_iso_mission} with~\cite{declan_nav,autonav}, the optimal relative positions of the deputy spacecraft can be pre-computed offline, freeing onboard computational resources for some other highly autonomous tasks.

Numerical simulations are performed to demonstrate the efficacy of our method using a quasi-realistic empirical population of ISOs~\cite{Engelhardt_2017,declan_nav} with the empirical history of the navigation uncertainties~\cite{declan_nav,autonav,my_iso_mission,myiso}, where the optimization problems are solved using Nelder-Mead. Our first simulation examines the probability that a single spacecraft would be able to view the ISO, assuming an uncertainty ellipsoid with equal $x$, $y$, and $z$ radii. As is expected from the aforementioned probability bound, the probability that the spacecraft views the ISO decreases as the ISO state uncertainty increases. Our second simulation empirically determines the optimal number of spacecraft to view as many POIs on the sphere as possible, where the size of the uncertainty sphere remains constant. It is demonstrated that a multi-spacecraft system both observes more POIs and results in a lower information cost than a singular spacecraft, however, an excess of spacecraft increases the overlap between the visual coverage of the system. 
\subsection{Related Work}
Previous studies have investigated the potential of ISO flyby missions for a single spacecraft, proposing viable methods for observing ISOs and other high-speed celestial objects~\cite{my_iso_mission,myiso}. For example, it is rigorously shown that a control policy designed using the spectrally normalized deep neural network (SN-DNN) guarantees local, finite-time exponential boundedness of the expected spacecraft delivery error, assuming that the bounds for the estimation errors are known. These approaches would allow at least to encounter ISOs even under their large state uncertainty. However, the entirety of the ISO cannot be observed through a flyby as matching the spacecraft velocity with the ISOs' is unrealistic with current propulsion technologies~\cite{declan_nav}, making such missions inefficient for gathering valuable scientific data. Several numerical and theoretical methods have been proposed to circumvent these difficulties through the lens of information-based optimization. Previous works in this field with a swarm of spacecraft can be broken into two categories: gaining information from a body with a determined orbit, such as satellites or other spacecraft, and gaining information from an object with an undetermined orbit, such as celestial bodies.

In~\cite{Attitude-Control-SC-Bodies}, a passive observation technique is proposed for celestial bodies with known trajectories, whereas for the case of ISOs, an active approach should be taken to sufficiently capture information to address the large state uncertainty. Further, the algorithm utilized in the study only maximizes the amount of the celestial body covered by the field of view (FOV) of the spacecraft in the swarm. The overlap between spacecraft in the swarm must be minimized to make a more robust algorithm. This would prevent resources from being wasted by spacecraft redundantly examining the same area of the ISO. The collision avoidance techniques introduced in~\cite{Distributed-fast-MP} utilize a minimum scalar distance that a spacecraft must be from each other. However, this constraint ignores any angular or directional constraints. With the highly uncertain trajectory of the ISO, it is crucial to minimize the amount of overlap in information gained by any two spacecraft.

This consideration is crucial when optimizing information from bodies with controlled trajectories, such as during on-orbit inspection or servicing of satellites \cite{Ref:Info-based-GandC}, \cite{Ref:Coord-Motion-Planning}. These studies utilize an information cost to create a fuel and energy-efficient orbit around a target satellite. However, the information cost presented in these studies does not account for the spacecraft's orientation or field of view. Instead, they determine a baseline distance for the spacecraft to start at and then algorithmically determine an orbit for the spacecraft to take. In the case of this study, orbiting the ISO would be infeasible again due to the ISO's high relative velocity and the uncertainty of its dynamics. Some other studies focus on optimizing information from celestial bodies with uncontrolled trajectories; however, only comets and asteroids have been examined so far \cite{Ref:Attitude-control,Ref:Complex-Space-Structures}. In these studies, either the spacecraft that have to gain information are stationary, or the object they are observing is stationary and the spacecraft performs a flyby inspection of the celestial body. With an ISO, the spacecraft has to assume a formation in order to maximize the information gained, since the ISO's state is uncertain and it never remains stationary. This paper proposes one solution to the issues listed above through the proactive deployment of multiple spacecraft, extending the ideas of information gain-based optimization. Hierarchical stochastic contraction also enables explicitly considering the state uncertainty of the ISO, having the spacecraft cooperate to gather visual data on all perspectives of the ISO locally optimally in one flyby mission.


%
\section{Problem Formulation}
\label{sec_problem}
This paper considers the following translational dynamics of a chief spacecraft (which could be virtual) relative to an Interstellar Object (ISO):
\begin{subequations}
    \label{original_dyn}
\begin{align}
    \begin{bmatrix}
        d\toe \\
        dx
    \end{bmatrix}
    &= 
    \begin{bmatrix}
        \psi(\toe,t) \\
        f(\toe,x,t)+g(\toe,x,t)u_a(\htoe,\hx,t)
    \end{bmatrix}dt \label{eq_dyn}\textcolor{white}{\eqref{eq_dyn}} \\
    \begin{bmatrix}
        d\htoe \\
        d\hx
    \end{bmatrix}
    &= 
    \begin{bmatrix}
        \psi(\htoe,t) \\
        f(\htoe,\hx,t)+g(\htoe,\hx,t)u_a(\htoe,\hx,t)
    \end{bmatrix}dt \label{eq_estimation}\textcolor{white}{\eqref{eq_estimation}} \\
    &+\ell_a(\htoe,\hx,t) (y-h(\htoe,\hx,t))dt \\
    ydt &= h(\toe,x,t)dt+G(\toe,x,t)d\mathcal{W}\label{eq_measurement}\textcolor{white}{\eqref{eq_measurement}}
\end{align}    
\end{subequations}
with the following dynamics of $N$ deputy spacecraft ($i = 1,\cdots,N$) relative to the chief spacecraft: 
\begin{align}
    \label{original_dyn_i}
    dx_i &= (f_i(\toe_c,u_c,x_i,t)+g_i(\toe_c,x_i,t)u_i(\toe_c,u_c,x_i,t))dt
\end{align}
where the relative state of the chief spacecraft is given in the local-vertical local-horizontal (LVLH) frame centered on the ISO~\cite[pp. 710-712]{lvlhbook}, the relative state of the deputy spacecraft is given in the LVLH frame centered on the chief spacecraft, $\psi$, $f$, $g$, $h$, $f_i$, and $g_i$ are known smooth functions (see, \eg{},~\cite{doi:10.2514/1.G000218,doi:10.2514/1.55705,doi:10.2514/1.37261}), ${\mathcal{W}}:{\mathbb{R}}_{\geq 0} \mapsto {\mathbb{R}}^{w}$ is a $w$-dimensional Wiener process for representing the measurement noise with a known matrix-valued function $G$, and the definitions of the other symbols are given in Table~\ref{tab_dynamics}.

Note that (a) the dynamics are expressed using the It\^{o} stochastic differential equation accounting for the stochastic uncertainty of the state measurement~\cite[p. 100]{arnold_SDE} (see also~\cite[p. xii]{arnold_SDE} for the notations used); (b) the process and the measurement noises other than $G(\toe,x,t)d\mathcal{W}$ are omitted in the chief and deputy spacecraft dynamics of~\eqref{original_dyn}~and~\eqref{original_dyn_i} for simplicity of our discussion, assuming that they are negligible compared to the dominant noise in the measurement $y$ due to the large state uncertainty of the ISO; and (c) the orbital elements $\toe_c$ and the control input $u_c$ of the chief spacecraft are assumed to be given to all the deputy spacecraft via communication and relative navigation.
\begin{table}[htbp]
\caption{Summary of the symbols in~\eqref{original_dyn}~and~\eqref{original_dyn_i}.}
\label{tab_dynamics}
\scriptsize
\begin{center}
\renewcommand{\arraystretch}{1.3}
\begin{tabular}{ l l }
\hline
\hline
{Symbol} & {Description} \\ 
\hline
S/C & Shorthand for spacecraft  \\
$\toe$ & ISO orbital elements, $\toe: \mathbb{R}_{\geq 0} \to \mathbb{R}^n$ \\ 
$x$ & State of chief S/C relative to $\toe$, $x: \mathbb{R}_{\geq 0} \to \mathbb{R}^n$ \\ 
$\htoe$ & Estimated ISO state, $\htoe: \mathbb{R}_{\geq 0} \to \mathbb{R}^n$ \\ 
$\hx$ & Estimated state of $x$, $\hx: \mathbb{R}_{\geq 0} \to \mathbb{R}^n$ \\
$u_a$ & Controller of chief S/C, $u_a:\mathbb{R}^n\times\mathbb{R}^n\times{\mathbb{R}}_{\geq 0}\mapsto{\mathbb{R}}^m$ \\
$\ell_a$ & Estimation gain, $\ell_a:\mathbb{R}^n\times\mathbb{R}^n\times{\mathbb{R}}_{\geq 0}\mapsto \to \mathbb{R}^{2n\times k}$ \\
$y$ & State measurement of chief S/C, $y: \mathbb{R}_{\geq 0} \to \mathbb{R}^k$\\
$i$ & Index of deputy S/C, $i = 1, ..., N$ \\
$x_i$ & State of $i$th deputy S/C relative to $x$, $x_i: \mathbb{R}_{\geq 0} \to \mathbb{R}^n$ \\ 
$\toe_c$ & Orbital elements of chief S/C, $\toe_c: \mathbb{R}_{\geq 0} \to \mathbb{R}^n$  \\ 
$u_c$ & Shorthand for $u_a(\htoe,\hx,t)$ communicated to deputy S/C  \\ 
$u_i$ & Controller of deputy S/C, $u_i:\mathbb{R}^n\times\mathbb{R}^m\times\mathbb{R}^n\times{\mathbb{R}}_{\geq 0}\mapsto{\mathbb{R}}^m$ \\
\hline
\hline
\end{tabular}
\end{center}
\end{table}
\subsection{Technical Challenges in ISO Exploration}
Unlike conventional exploration targets with sufficient information on their orbital properties in advance, ISOs always come with the following inherent GNC challenges due to their poorly constrained orbits with generally high inclinations and relative velocities:
\begin{enumerate}
    \item The ISO state and its onboard estimate in~\eqref{original_dyn} change dramatically in time due to the dominant stochastic term $G(\toe,x,t)d\mathcal{W}$, where using pre-designed GNC policies becomes unrealistic even at the terminal phase of the mission. This leads to significant onboard resources for uncertainty quantification and utilization with the radically changing ISO state information.\label{itemissue1}
    \item Due to the limited onboard computational capacity of the spacecraft, executing complex algorithms can be challenging. This is especially true for small satellites like femtosatellites, typical in affordable multi-spacecraft missions, where running any nonlinear and optimal GNC algorithms proactively accounting for the large ISO uncertainty could be nearly impossible.\label{itemissue2}
\end{enumerate}

These issues could be partially addressed for a single spacecraft, \eg{}, by viewing $u_a$ of~\eqref{eq_dyn} as an approximated autonomous control policy of a target controller $u_d$, which could be computationally expensive and gives the following target trajectory with the true states:
\begin{align}
    \label{target_dyn}
    dx_d = (f(\toe,x_d,t)+g(\toe,x_d,t)u_d(\toe,x_d,t))dt
\end{align}
where it is rigorously shown that $u_a$ designed using the techniques of~\cite{myiso} guarantees the local, finite-time exponential boundedness of the approximated trajectory $x$ of~\eqref{eq_dyn} and the target trajectory $x_d$ of~\eqref{target_dyn}, assuming that the bounds for the estimation errors $\htoe$ and $\hx$ in~\eqref{eq_estimation} are known.

In this paper, we extend the ideas of this result further by viewing both $u_a$ of~\eqref{eq_dyn} and $\ell_a$ of~\eqref{eq_estimation} as approximations of a target control policy $u_d$ and target estimation gain $\ell_d$ that guarantee contraction of the system~\eqref{original_dyn}~\cite{Ref:contraction1}. With some reasonable assumptions, we first explicitly compute the failure probability of the chief spacecraft's ISO encounter under these approximated policies, extensively using the hierarchical contraction in stochastic systems~\cite{Ref:contraction1,Pham2009}. The failure probability is used to construct an ellipsoid around the terminal position of the chief spacecraft, in which the target ISO would be located at the terminal time with a finite probability. We then propose a method to find the terminal positions of the deputy spacecraft optimally distributed around the ellipsoid, which maximizes the information we can get out of all the points of interest (POIs) in the ellipsoid. In particular, we utilize an information cost function characterized by the distance between the POIs and each spacecraft in the swarm, accounting for spacecraft positions and attitudes, camera specifications, and ISO position uncertainty so we can get sufficient visual data on the entirety of the ISO. 

The final part of the deputy spacecraft's optimization can be performed as long as we have an upper bound for the uncertainty history over time, $\|G(x(t),t)\|$ for $G$ of~\eqref{eq_measurement} (which we empirically obtained in our previous work~\cite{my_iso_mission} using the AutoNav system~\cite{autonav}) and thus can be pre-computed.
\section{Probability of ISO Encounter}
Suppose that our controller $u_a$ of~\eqref{eq_dyn} and estimation gain $\ell_a$ of~\eqref{eq_estimation} approximate some target control policy $u_d$ and target estimation gain $\ell_d$ of~\eqref{eq_estimation} to avoid the potential computational burden associated with the implementation of $u_d$ and $\ell_d$. Using, \eg{}, the spectrally-normalized deep neural network (SN-DNN), we can ensure that there exist $\epsilon_{u},\epsilon_{e} \in \mathbb{R}_{>0} $ that satisfy the following at least locally~\cite{miyato2018spectral,myiso}:
\begin{align}
    \label{eq_approx_errors}
    \|u_a(\toe,x,t) - u_d(\toe,x,t)\| \leq \epsilon_{u},~\|\ell_a(\toe,x,t) - \ell_d(\toe,x,t)\| \leq \epsilon_{u}
\end{align}
for all $(\toe,x,t) \in \mathcal{S}$, where $\mathcal{S}$ is some compact set containing $(\toe,x,t)$. Also, $u_a(\htoe,\hx,t)$ and $\ell_a(\htoe,\hx,t)$ of~\eqref{original_dyn} can be decomposed into the estimation error and approximation error parts as follows:
\begin{align}
    u_a(\htoe,\hx,t) &= \underbrace{u_a(\htoe,\hx,t)-u_a(\toe,x,t)}_{\text{estimation error part}} \\
    &+\underbrace{u_a(\toe,x,t)-u_d(\toe,x,t)}_{\text{approximation part}}+u_d(\toe,x,t) \\
    \ell_a(\htoe,\hx,t) &= \underbrace{\ell_a(\htoe,\hx,t)-\ell_d(\htoe,\hx,t)}_{\text{approximation error part}}+\ell_d(\htoe,\hx,t)
\end{align}
where $\toe$, $x$, $\htoe$, and $\hx$ are the states of~\eqref{original_dyn}. Based on these observations, let us define smooth paths $q_c(\mu,t)$ and $q_e(\mu,t) = [q_{e_1}(\mu,t)^{\top},q_{e_2}(\mu,t)^{\top}]^{\top}$ parameterized by $\mu \in [0,1]$ to have
\begin{itemize}
    \item $q_c(0,\mu)=x_d$, $q_{e_1}(0,\mu)=\toe$, and $q_{e_2}(0,\mu)=x$
    \item $q_c(1,\mu)=x$, $q_{e_1}(1,\mu)=\htoe$, and $q_{e_2}(1,\mu)=\hx$
\end{itemize}
where $x_d$ is the target trajectory of~\eqref{target_dyn}, and consider the following virtual system of $q_c(\mu,t)$ and $q_e(\mu,t)$~\cite{Ref:contraction3,tutorial}:
\begin{subequations}
\label{eq_virtual_system}
\begin{align}
    dq_c &= \underbrace{f(\toe,q_c,t)+g(\toe,q_c,t)u_d(\toe,q_c,t)}_{\text{contracting part}}dt \label{eq_virtual_control}\textcolor{white}{\eqref{eq_virtual_control}}\\
    &\hspace{-1.0em}+g(\toe,x,t)\underbrace{(u_a(q_{e_1},q_{e_2},t)-u_a(\toe,x,t))}_{\text{estimation error part}}dt \\
    &\hspace{-1.0em}+\mu g(\toe,x,t)\underbrace{(u_a(\toe,x,t)-u_d(\toe,x,t))}_{\text{approximation error part}}dt \\
    \begin{bmatrix}
        dq_{e_1} \\
        dq_{e_2}
    \end{bmatrix}
    &=
    \underbrace{
    \begin{bmatrix}
        \psi(q_{e_1},t) \\
        f(q_{e_1},q_{e_2},t)+g(q_{e_1},q_{e_2},t)u_a(\htoe,\hx,t)
    \end{bmatrix}dt}_{\text{contracting part}} \label{eq_virtual_estimation}\textcolor{white}{\eqref{eq_virtual_estimation}} \\
    &\hspace{-1.0em}\underbrace{-\ell_d(\htoe,\hx,t) (h(q_{e_1},q_{e_2},t)-h(\toe,x,t))dt}_{\text{contracting part}} \nonumber \\
    &\hspace{-1.0em}+\underbrace{(\ell_d(\htoe,\hx,t) - \ell_a(\htoe,\hx,t))}_{\text{approximation error part}}\underbrace{(h(q_{e_1},q_{e_2},t)-h(\toe,x,t))}_{\text{estimation error part}}dt \nonumber \\
    &\hspace{-1.0em}+\underbrace{\mu \ell_a(\htoe,\hx,t)G(\toe,x,t)d\mathcal{W}}_{\text{measurement noise}}
\end{align}
\end{subequations}
where the definition of the contracting dynamics is to be elaborated in Theorem~\ref{thm_hierarchical_contraction}. It can be verified that setting $\mu=0$ in~\eqref{eq_virtual_control} and \eqref{eq_virtual_estimation} results in~\eqref{target_dyn} and~\eqref{eq_dyn}, respectively, and setting $\mu=1$ results in the $x$ dynamics of~\eqref{eq_dyn} and~\eqref{eq_estimation}, respectively. Consequently, $q_c=x_d$ \& $q_e = [\toe^{\top},x^{\top}]^{\top}$ and $q_c = x$ \& $q_e = [\htoe^{\top},\hx^{\top}]^{\top}$ are indeed particular solutions of~\eqref{eq_virtual_system}.

Let us denote the contracting parts of~\eqref{eq_virtual_control} and~\eqref{eq_virtual_estimation} as $f_c(\toe,q_c,t)$ and $f_e(\xi,q_e,t)$, respectively, where $\xi = (\toe,x,\htoe,\hx)$ is introduced here for keeping the notation compact. If all the error and noise terms in~\eqref{eq_virtual_system} are zero, the differential dynamics of~\eqref{eq_virtual_control} and~\eqref{eq_virtual_estimation} for~$\partial_{\mu}q_c=\partial q_c/\partial \mu$ and~$\partial_{\mu}q_e=\partial q_e/\partial \mu$ are given as
\begin{align}
\label{eq_differential_dyn_sto}
\partial_{\mu}\dot{q}_c= \frac{\partial f_c}{\partial q_c}\partial_{\mu}q_c,~\partial_{\mu}\dot{q}_e = \frac{\partial f_e}{\partial q_e}\partial_{\mu}q_e
\end{align}
as in the standard discussions in contraction theory~\cite{Ref:contraction1,Pham2009,tutorial}, which gives the following result using the hierarchical property of stochastic contraction.
\begin{theorem}
\label{thm_hierarchical_contraction}
Suppose that we have uniformly bounded positive definite matrices $\underline{m}_c \mathbb{I} \preceq M_c(t) \preceq \overline{m}_c \mathbb{I}$ and $\underline{m}_e \mathbb{I} \preceq M_e(t) \preceq \overline{m}_e \mathbb{I}$ that satisfy the following contraction conditions:
\begin{subequations}
\label{eq_contraction}
\begin{align}
    &\dot{M}_c + M_c\frac{\partial f_c}{\partial x_c} + \frac{\partial f_c}{\partial x_c}^{\top}M_c \leq -2\alpha_c M_c \label{contraction_c}\textcolor{white}{\eqref{contraction_c}} \\
    &\dot{M}_e + M_e\frac{\partial f_e}{\partial x_e} + \frac{\partial f_e}{\partial x_e}^{\top}M_e \leq -2\alpha_e M_e \label{contraction_e}\textcolor{white}{\eqref{contraction_e}} 
\end{align}    
\end{subequations}
where $\alpha_c,\alpha_e,\underline{m}_c,\overline{m}_c,\underline{m}_e,\overline{m}_e\in\mathbb{R}_{> 0}$. Suppose also that the approximation errors are bounded as in~\eqref{eq_approx_errors} and that $\exists\bar{g},\bar{u},\bar{h},\bar{\ell}\in\mathbb{R}_{\geq 0}$ and $\bar{\zeta}:\mathbb{R}_{\geq0}\mapsto\mathbb{R}_{\geq 0}$ \st{} $\|g(\toe,x,t)\| \leq \bar{g}$, $\|\partial u_a/\partial q_e\| \leq \bar{u}$, $\|\partial h/\partial q_e\| \leq \bar{h}$, $\|\ell_a(\htoe,\hx,t)\|_F^2 \leq \bar{\ell}$, $\|G(\toe,x,t)\|_F^2 \leq \bar{\zeta}(t)$, $\forall \xi,t$, where $\|\cdot\|_F$ denotes the Frobenius norm. Then there exist some positive constants $\alpha_s,\gamma_c,\lambda \in \mathbb{R}_{>0}$ that satisfy the following probability bound locally for the states and time in $\mathcal{S}$ of~\eqref{eq_approx_errors}:
\begin{align}
    \label{eq_probability_bound}
    \mathbb{P}[\|x(t)-x_d(t)\| \geq D] \leq \frac{\mathbb{E}[\mathcal{V}(0)]e^{-2\alpha_s t}+c_s+e^{-2\alpha_s t}\bar{\zeta}_I(t)}{D\underline{m}} 
\end{align}
where $D$ is a user-defined failure distance,  $c_s ={(\overline{m}_c\bar{g}\epsilon_c)^2}/{(2\alpha_s\gamma_c)}$, $\bar{\zeta}_I(t) = \lambda\overline{m}_e\bar{\ell}\int_{0}^te^{2\alpha_s \tau} \bar{\zeta}(\tau) d\tau$, $\underline{m} = \underline{m}_c+\lambda\underline{m}_e$, and $\mathcal{V}_{ec}$ is the incremental Lyapunov function to be defined in the following proof.
\end{theorem}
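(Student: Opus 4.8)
The plan is to construct a single incremental Lyapunov function for the hierarchical virtual system~\eqref{eq_virtual_system}, propagate its expectation through It\^{o} calculus using the two contraction conditions~\eqref{eq_contraction}, and then convert the resulting mean-square estimate into the probability bound~\eqref{eq_probability_bound} via Markov's inequality. Concretely, I would set $\mathcal{V}_c = \int_0^1 \partial_\mu q_c^\top M_c\, \partial_\mu q_c\, d\mu$ and $\mathcal{V}_e = \int_0^1 \partial_\mu q_e^\top M_e\, \partial_\mu q_e\, d\mu$, i.e.\ the squared Riemannian lengths of the paths $q_c(\mu,t)$ and $q_e(\mu,t)$ in the respective metrics, and define the combined function $\mathcal{V}_{ec} = \mathcal{V}_c + \lambda \mathcal{V}_e$ for a coupling weight $\lambda>0$ fixed later. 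Because $M_c\succeq\underline{m}_c\mathbb{I}$ and $M_e\succeq\underline{m}_e\mathbb{I}$, this combined function controls the delivery error from below through $\underline{m}=\underline{m}_c+\lambda\underline{m}_e$, which is precisely the quantity in the denominator of~\eqref{eq_probability_bound}; this lower metric bound is what the final Markov step will exploit.

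First I would analyze the estimation subsystem~\eqref{eq_virtual_estimation}, which sits at the bottom of the hierarchy and is therefore self-contained. Applying It\^{o}'s lemma to $\mathcal{V}_e$ along its differential dynamics~\eqref{eq_differential_dyn_sto}, the contracting part contributes $-2\alpha_e\mathcal{V}_e$ by~\eqref{contraction_e}; the approximation-error cross term is bounded using $\|\partial h/\partial q_e\|\le\bar h$ together with the estimation-gain error $\epsilon_u$ and $\overline{m}_e$; and the second-order It\^{o} term generated by the measurement noise $\mu\,\ell_a(\htoe,\hx,t)G(\toe,x,t)\,d\mathcal{W}$ produces a contribution of order $\overline{m}_e\bar\ell\,\bar\zeta(t)$ after invoking $\|\ell_a\|_F^2\le\bar\ell$ and $\|G\|_F^2\le\bar\zeta(t)$. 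Taking expectations annihilates the martingale increment and yields a scalar differential inequality of the form $\tfrac{d}{dt}\mathbb{E}[\mathcal{V}_e]\le -2\alpha_e\,\mathbb{E}[\mathcal{V}_e] + (\text{error}) + \overline{m}_e\bar\ell\,\bar\zeta(t)$.

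Next I would treat the control subsystem~\eqref{eq_virtual_control}. Its drift gives $-2\alpha_c\mathcal{V}_c$ from~\eqref{contraction_c}, an approximation-error term bounded by $\overline{m}_c\bar g\,\epsilon_c$ via $\|g\|\le\bar g$ (these are exactly the ingredients of $c_s$), and, crucially, the coupling term $g(u_a(q_{e_1},q_{e_2},t)-u_a(\toe,x,t))$ through which the \emph{estimation} error feeds into the control dynamics. I would bound this coupling with $\|g\|\le\bar g$ and $\|\partial u_a/\partial q_e\|\le\bar u$ so that it is dominated by $\mathcal{V}_e$ up to a Young's-inequality split with constant $\gamma_c$. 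Forming $\mathcal{V}_{ec}=\mathcal{V}_c+\lambda\mathcal{V}_e$ and choosing $\lambda$ large enough, this cross term is absorbed into the strictly negative $-2\lambda\alpha_e\mathcal{V}_e$ contribution, leaving a net contraction at an effective rate $\alpha_s$ with $0<\alpha_s\le\min(\alpha_c,\alpha_e)$, where $\gamma_c$ measures the rate margin consumed by the approximation error. This dominance argument, which is the essence of hierarchical stochastic contraction, is the step I expect to be the main obstacle: one must verify simultaneously that $\lambda$ can be chosen to absorb the control-to-estimation coupling and that the remaining damping still exceeds the reduction $\gamma_c$ needed for the approximation-error bound, all uniformly on the local set $\mathcal{S}$ of~\eqref{eq_approx_errors}.

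Finally, combining the two inequalities gives $\tfrac{d}{dt}\mathbb{E}[\mathcal{V}_{ec}]\le -2\alpha_s\,\mathbb{E}[\mathcal{V}_{ec}] + 2\alpha_s c_s + \lambda\overline{m}_e\bar\ell\,\bar\zeta(t)$, with $c_s=(\overline{m}_c\bar g\,\epsilon_c)^2/(2\alpha_s\gamma_c)$. Integrating this via the comparison lemma yields $\mathbb{E}[\mathcal{V}_{ec}(t)]\le \mathbb{E}[\mathcal{V}_{ec}(0)]e^{-2\alpha_s t}+c_s+e^{-2\alpha_s t}\bar\zeta_I(t)$, where the variation-of-constants form reproduces $\bar\zeta_I(t)=\lambda\overline{m}_e\bar\ell\int_0^t e^{2\alpha_s\tau}\bar\zeta(\tau)\,d\tau$ and recovers the three terms of the numerator of~\eqref{eq_probability_bound}. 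Applying Markov's inequality to the nonnegative quantity $\mathcal{V}_{ec}$, together with the lower bound relating it to $\underline{m}\|x(t)-x_d(t)\|$ through $M_c,M_e\succeq\underline{m}_{(\cdot)}\mathbb{I}$, then produces the denominator $\underline{m}D$ and establishes~\eqref{eq_probability_bound}.
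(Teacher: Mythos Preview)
Your proposal is correct and follows essentially the same route as the paper: the same combined incremental Lyapunov function $\mathcal{V}_{ec}=\mathcal{V}_c+\lambda\mathcal{V}_e$, the same use of the contraction conditions~\eqref{eq_contraction} together with a Young's-inequality split (constant $\gamma_c$) and a choice of $\lambda$ to dominate the hierarchical coupling, followed by Dynkin's formula (your ``It\^{o}'s lemma plus taking expectations'') and Markov's inequality. The only cosmetic difference is that the paper works directly with the integrand $V_{ec}$ and casts the coupling-absorption step as a $2\times2$ matrix inequality $\begin{bsmallmatrix}-2\bar{\alpha}_c & \overline{m}_c\bar{g}\bar{u}\\ \overline{m}_c\bar{g}\bar{u} & -2\lambda\bar{\alpha}_e\end{bsmallmatrix}\preceq -2\alpha_s\begin{bsmallmatrix}\overline{m}_c&0\\0&\lambda\overline{m}_e\end{bsmallmatrix}$, whereas you describe the same condition verbally.
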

\begin{proof}
Let our differential Lyapunov function $V_{ec}$ be defined as
\begin{align}
    V_{ec} = \partial_{\mu}q_c^{\top}M_c\partial_{\mu}q_c+\lambda \partial_{\mu}q_e^{\top}M_e\partial_{\mu}q_e
\end{align}
where $\lambda \in \mathbb{R}_{>0}$ is a constant to be defined later, and the arguments are omitted for notational simplicity.
Applying the infinitesimal differential generator $\mathcal{L}$ of~\cite[p. 15]{sto_stability_book} along with the contraction conditions~\eqref{eq_contraction}, we have
\begin{align}
    \mathcal{L}V_{ec} &\leq -2\alpha_c\underline{m}_c\|\partial_{\mu}q_c\|^2+2\overline{m}_c\|\partial_{\mu}q_c\|(\bar{g}\bar{u} \|\partial_{\mu}q_e\|+\bar{g}\epsilon_c) \\    &-2\lambda\alpha_e\underline{m}_e\|\partial_{\mu}q_e\|^2+2\lambda\overline{m}_e\epsilon_e\bar{h}\|\partial_{\mu}q_e\|^2+\lambda\overline{m}_e\bar{\ell}\bar{\zeta}(t).
\end{align}
Using the relation $2ab \leq \gamma_ca^2+\gamma_c^{-1} b^2$, which holds for any $a,b\in\mathbb{R}$ and any $\gamma_c \in \mathbb{R}_{>0}$, with $a = \|\partial_{\mu}q_c\|$ and $b = \overline{m}_c\bar{g}\epsilon_c$, this reduces to
\begin{align}
    \mathcal{L}V_{ec} &\leq
    \begin{bsmallmatrix}
        \|\partial_{\mu}q_c\| \\
        \|\partial_{\mu}q_e\|
    \end{bsmallmatrix}^{\top}
    \begin{bsmallmatrix}
        -2\alpha_c\underline{m}_c+\gamma_c & \overline{m}_c\bar{g}\bar{u} \\
        \overline{m}_c\bar{g}\bar{u} & -2\lambda(\alpha_e\underline{m}_e-\overline{m}_e\epsilon_e\bar{h})
    \end{bsmallmatrix}    
    \begin{bsmallmatrix}
        \|\partial_{\mu}q_c\| \\
        \|\partial_{\mu}q_e\|
    \end{bsmallmatrix} \\
    &+\gamma_c^{-1}(\overline{m}_c\bar{g}\epsilon_c)^2+\lambda\overline{m}_e\bar{\ell}\bar{\zeta}(t).
\end{align}
Thus, if we select $\alpha_c$, $\alpha_e$, and $\gamma_c$ to have $\exists\bar{\alpha}_c,\bar{\alpha}_e \in \mathbb{R}_{>0}$ \st{} $-2\alpha_c\underline{m}_c+\gamma_c \leq -2\bar{\alpha}_c$ and $-2(\alpha_e\underline{m}_e-\overline{m}_e\epsilon_e\bar{h}) \leq -2\bar{\alpha}_e$, and then select $\lambda$ to have $\alpha_s \in \mathbb{R}_{>0}$ \st{} $\begin{bsmallmatrix}-2\bar{\alpha}_c & \overline{m}_c\bar{g}\bar{u} \\ \overline{m}_c\bar{g}\bar{u} & -2\lambda\bar{\alpha}_e\end{bsmallmatrix} \preceq -2\alpha_s\begin{bsmallmatrix} \overline{m}_c & 0 \\ 0 & \lambda\overline{m}_e\end{bsmallmatrix}$, we get
\begin{align}
    \mathcal{L}V_{ec} &\leq -2\alpha_sV_{ec}+\gamma_c^{-1}(\overline{m}_c\bar{g}\epsilon_c)^2+\lambda\overline{m}_e\bar{\ell}\bar{\zeta}(t).
\end{align}
Applying Dynkin's formula~\cite[p. 10]{sto_stability_book} yields
\begin{align}
    \mathbb{E}[\mathcal{V}_{ec}(t)] \leq (\mathbb{E}[\mathcal{V}_{ec}(0)]+\bar{\zeta}_I(t))e^{-2\alpha_s t}+c_s
\end{align}
where $\mathcal{V}_{ec} = \int_0^1{V}_{ec}d\mu$, $c_s ={(\overline{m}_c\bar{g}\epsilon_c)^2}/{(2\alpha_s\gamma_c)}$, and $\bar{\zeta}_I(t) = \lambda\overline{m}_e\bar{\ell}\int_{0}^te^{2\alpha_s \tau} \bar{\zeta}(\tau) d\tau$. Then the result~\eqref{eq_probability_bound} follows from Markov's inequality~\cite[pp. 311-312]{probbook} with the lower bound of $\mathcal{V}_{ec}(t)$.
\end{proof}
\begin{remark}
It is shown that the positive definite matrices $M_c$ and $M_e$ in~\eqref{eq_contraction} can be found via convex optimization to get the approximated control and estimation policies $u_a$ and $\ell_a$ as in~\eqref{eq_approx_errors}. Reviewing ways to find $M_c$ and $M_e$ in~\eqref{eq_contraction} and making them state-dependent are beyond the scope of this paper, but those interested in knowing more about them can refer to~(\eg{}, in~\cite{Ref:contraction1,myiso,tutorial,Ref:Stochastic} and the references therein.
\end{remark}
\section{Multi-Spacecraft Positioning}
\label{Info:cost}
\begin{figure}[htbp]
    \centering
    \includegraphics[width=0.4\textwidth]{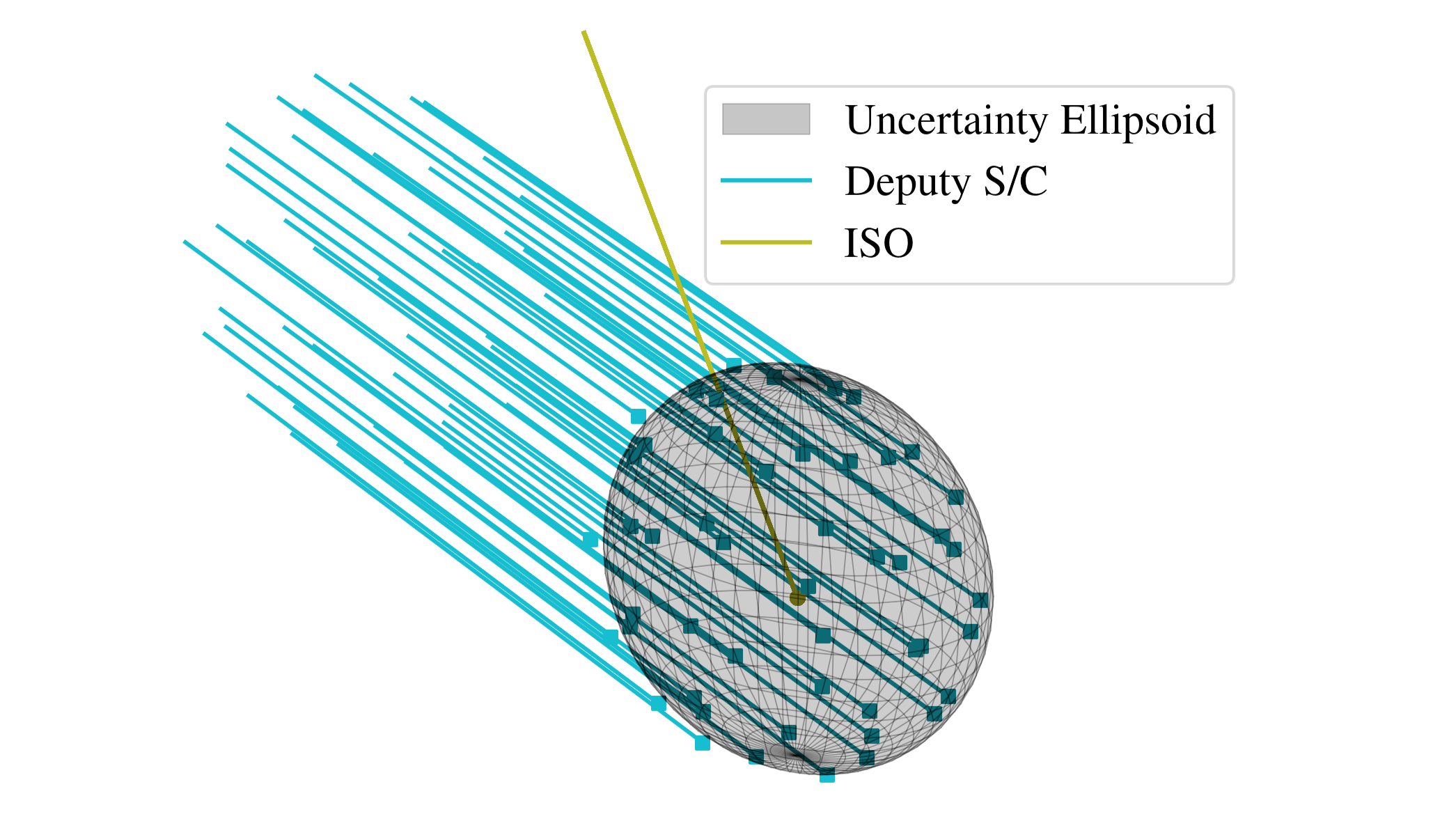}
\vspace{-1em}
    \caption{Conceptual illustration of the deputy spacecraft positioning using Theorem~\ref{thm_hierarchical_contraction} and~\eqref{eq_success}.}
    \label{fig_concept}
\vspace{-2em}
\end{figure}
\begin{figure}[htbp]
    \centering
    \includegraphics[width=0.5\textwidth]{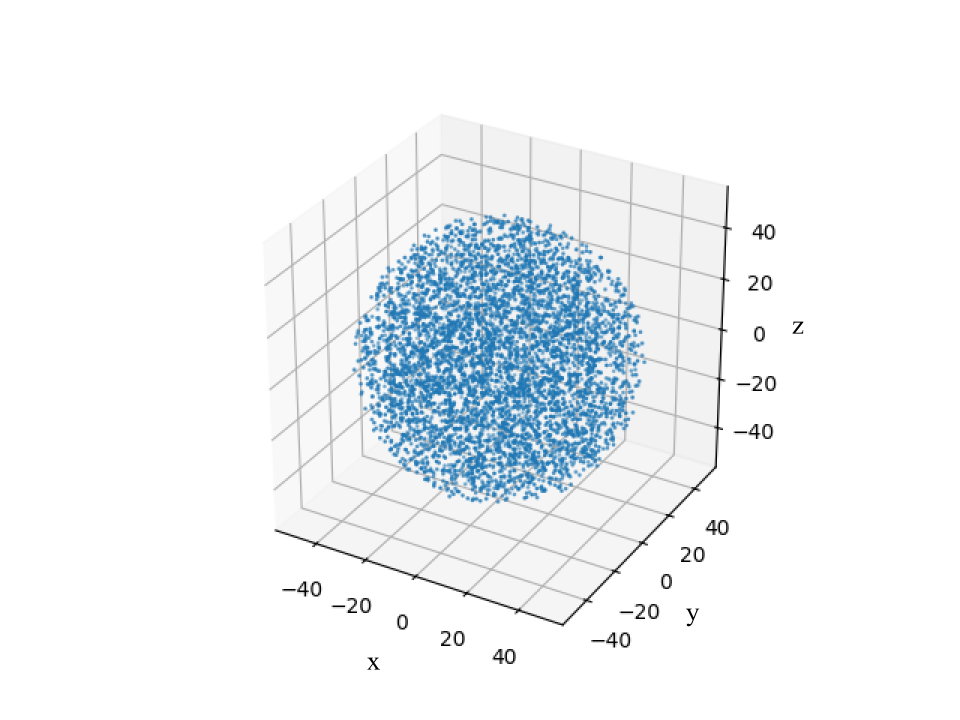}
\vspace{-1em}
    \caption{Example of 5,000 POIs being sampled in an uncertainty ellipsoid.}
    \label{fig:POIs}
\vspace{-3em}
\end{figure}
Now that we have an upper bound of the failure probability of the encounter by Theorem~\ref{thm_hierarchical_contraction}, we can use it to construct an uncertainty sphere around the chief spacecraft, in which the ISO is located at the terminal time $t = T$ with a finite probability. In particular, by setting the desired relative position of the chief spacecraft with respect to the ISO to zero (\ie{}, $p_d(T) = 0$ in~\eqref{eq_probability_bound} for $x_d^{\top} = [p_d^{\top},\dot{p}_d^{\top}]^{\top}$), we get
\begin{align}
    \label{eq_success}
    \mathbb{P}[\|p(T)\|\leq D] \geq \frac{D\underline{m}-\mathbb{E}[\mathcal{V}(0)]e^{-2\alpha_s T}-c_s-e^{-2\alpha_s T}\bar{\zeta}_I(T)}{D\underline{m}}     
\end{align}
where $p(T)$ is the relative position of the chief spacecraft with respect to the ISO at $t = T$. We could also use a weighted norm here to account for the differences in the estimation variance of each state, which results in an ellipsoid.

From the chief spacecraft's point of view, the relation~\eqref{eq_success} implies that as long as all the deputy spacecraft follow the chief spacecraft and are able to observe the entire surface and the interior of the ellipsoid at $t = T$, the ISO will be visible from one of our spacecraft at least with probability $p$, where $p$ is the right-hand side of~\eqref{eq_success}. Also, the chief spacecraft can be virtual as long as we can compute its trajectory leading to the bound~\eqref{eq_probability_bound}. Since the deputy spacecraft uses the relative dynamics~\eqref{original_dyn_i} with respect to the chief spacecraft, whose control input and orbital elements are assumed to be known via communication and relative navigation, existing control and planning strategies, \eg{},~\cite{doi:10.2514/1.G000218,Ref:phasesync}, can be applied without any modifications. However, doing this naively, as illustrated in Fig.~\ref{fig_concept}, would lead to partially wasting the computational resources of the spacecraft by redundantly examining the same area of the ellipsoid. This section discusses an approach to finding locally optimal terminal positions of the deputy spacecraft achieving the coverage, considering practical spacecraft's specifications.

In the following, we assume that POIs are randomly placed within the ellipsoid that the swarm is tasked with viewing, as depicted in Fig.~\ref{fig:POIs}. Since the ISO is expected to be located somewhere within the ellipsoid, the more POIs the swarm views, the more likely they are to view the ISO.
\subsection{Conal Approximation of Field of View}
The field of view (FOV) of a spacecraft is approximated using a cone as in \cite{cone_of_vision}. The vertex of the cone originates at the spacecraft's position $p_{\mathrm{SC}}$ and is oriented at an angle $\theta$ toward the origin of the uncertainty ellipsoid. The radius of the cone is defined as $\nu$ (see Fig.~\ref{fig:cone_fov}). Also, a plane is centered about the origin of the uncertainty ellipsoid and angled toward the cone to divide the ellipsoid into two hemispheres. The points in the hemisphere located further away from the spacecraft are ignored. 
This ensures that each spacecraft can only view one hemisphere of the ellipsoid since a spacecraft can only see one side of an ISO at a time. As described in Fig.~\ref{fig:plane}, if the ISO passes through the center of the uncertainty ellipsoid, the spacecraft can only view the POIs on the side facing toward it. The hemisphere facing away from it must be viewed by another spacecraft. 
\subsection{Cost Function Formulation}
The objective of the information cost is to maximize the amount of POIs visible in each spacecraft's FOV and minimize the total overlap between the FOV of every spacecraft in the swarm. A few remarks are made first about various points that are to be taken into consideration during the formation of the information cost function:
\vspace{-0.5em}
\begin{enumerate}
    \item As the spacecraft approaches the ISO, the information cost should become more optimal due to the spacecraft being able to see more details of the ISO; however, if the spacecraft gets too close, the cameras could fail to capture the entirety of the ISO, capturing only portions of it, so there must be some desired position that provides a compromise to the trade-off.
    \item The FOV of the camera is the angular expanse of the observable environment visible at any instant without any movement \cite{emery_chapter_2017}. It determines how much the camera can see, so orienting the camera such that the ISO lines up within the FOV of the spacecraft will minimize the information costs.
    \item POIs are also randomly sampled throughout the ISO's uncertainty ellipsoid. To quantify the amount of information collected, one of the primary goals of the multi-spacecraft system will be to view as many POIs as possible. By randomly sampling these points, the swarm will attempt to spread out evenly around the uncertainty ellipsoid around the ISO.
    \item In addition to these factors, the system should be concerned about sustainability and optimizing resource consumption. The optimization must minimize the number of spacecraft deployed without significantly decreasing the number of POIs viewed by the system. The system must also consider fuel and energy consumption, selecting terminal positions that require optimizing resource consumption. 
\end{enumerate}

\begin{figure}[htbp]
    \centering
    \includegraphics[width=0.45\textwidth]{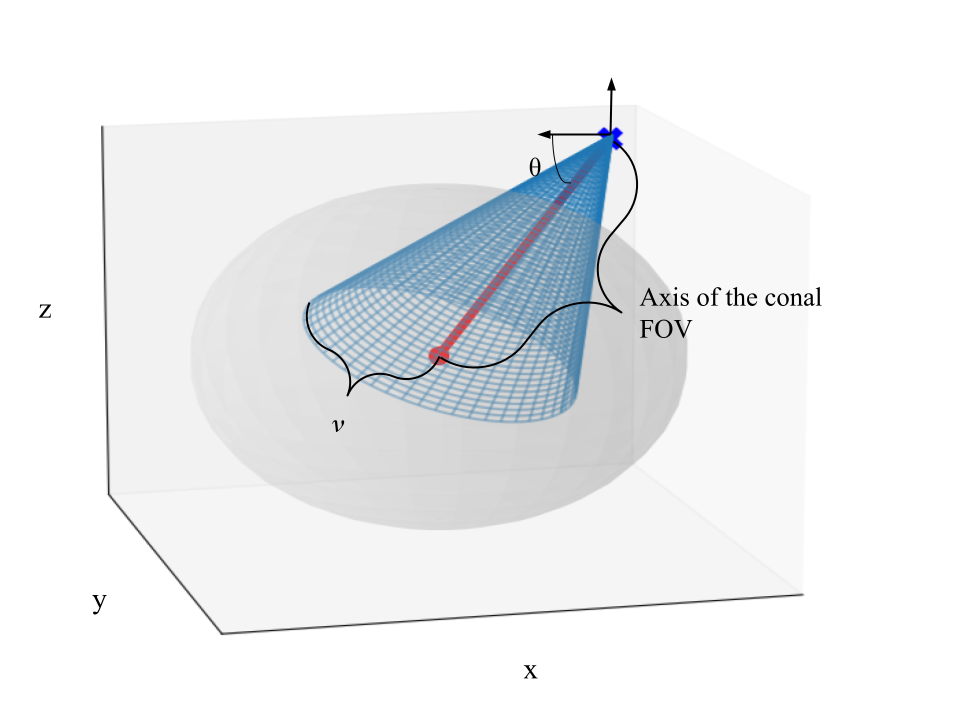}
    \caption{A diagram with a conal FOV with an apex at the spacecraft position is labeled.}
    \label{fig:cone_fov}
\vspace{-2em}
\end{figure}
\begin{figure}[htbp]
    \centering
    \includegraphics[width=0.45\textwidth]{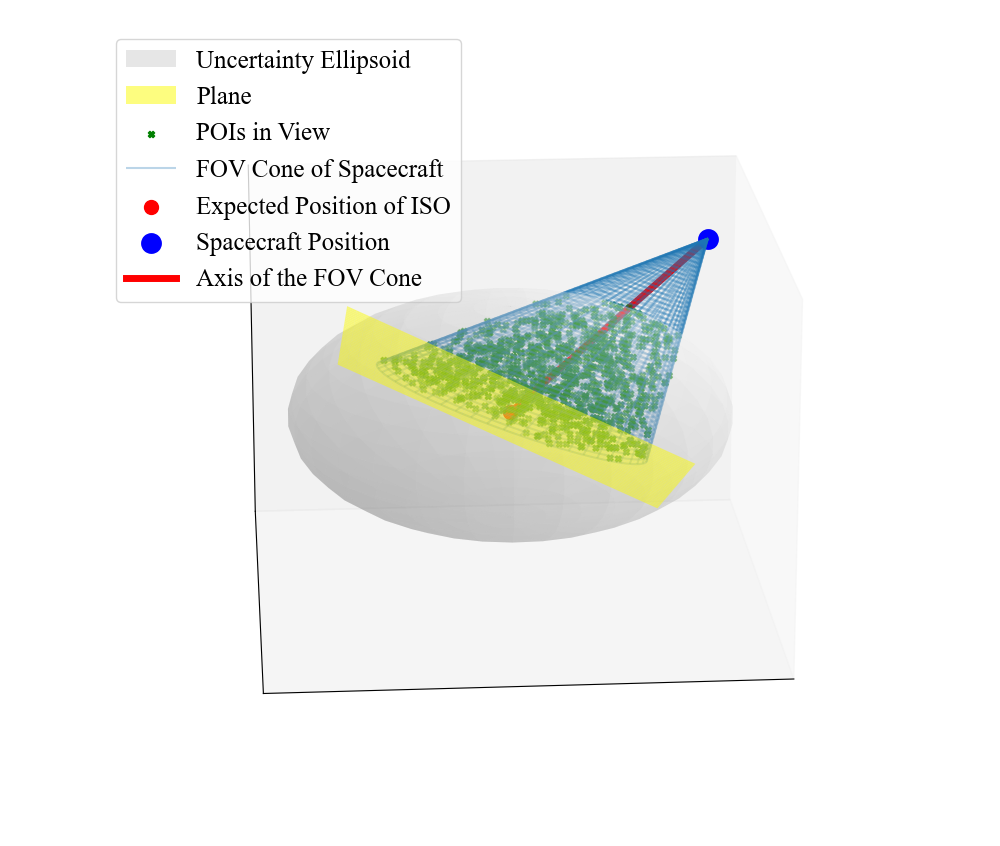}
\vspace{-2em}
    \caption{A diagram of the plane dividing the uncertainty ellipsoid into two hemispheres.}
    \label{fig:plane}
\vspace{-2em}
\end{figure}

First, to determine whether a point is located within a spacecraft's FOV, we introduce $\epsilon$, the orthogonal distance from the axis of the FOV cone (see Fig \ref{fig:cone_fov}) to the point. Taking $O_{\mathrm{ISO}}$ to be origin of the uncertainty ellipsoid, $\epsilon$ is calculated using the following steps: 
\begin{enumerate}[label={\textcolor{uiucblue}{\arabic*}.}]
    \item the cone distance is defined as follows to measure the POI's distance from the cone's axis:
    \begin{align}
        \text{cone distance}= (p_{\mathrm{SC}}- O_{\mathrm{ISO}}) \cdot \frac{p_{\mathrm{SC}}-O_{\mathrm{ISO}}}{||p_{\mathrm{SC}}-O_{\mathrm{ISO}}||}.
    \end{align}
    \item The cone's radius at that point along the axis is then calculated as follows:
    \begin{align}
        \text{cone radius}= \frac{\text{cone distance}}{p_{\mathrm{SC}}- O_{\mathrm{ISO}}} p_{\mathrm{SC}}- O_{\mathrm{ISO}}\tan\qty(\frac{\phi}{2}).
    \end{align}
    where $\phi$ is the aperture of the camera located on the spacecraft.
    \item The orthogonal distance $\epsilon$ is then defined as: 
    \begin{align}
        \epsilon= ||(POI-p_{\mathrm{SC}}) - \text{cone distance}\times\frac{p_{\mathrm{SC}}-O_{\mathrm{ISO}}}{||p_{\mathrm{SC}}-O_{\mathrm{ISO}}||}||.
    \end{align}
\end{enumerate}
If $\epsilon$ is less than the cone radius, then the POI is said to be located within the cone. This is checked for all POIs in the hemisphere of the uncertainty ellipsoid closest to the spacecraft. 

We also introduce $\kappa$, which is a metric describing the total overlap between the FOVs of two spacecraft. The formulation of $\kappa$ is given as follows:
\begin{enumerate}[label={\textcolor{uiucblue}{\arabic*}.}]
    \item First, a spacecraft's field of view needs to be defined. $\theta$ is the initial orientation of the spacecraft's field of view (see Fig \ref{fig:cone_fov}). The FOV for a single spacecraft can be written as
    \begin{align}
        (s_{i}, e_{i})= (\theta_{i}- \nu, \theta_{i}+ \nu)
    \end{align}
    where $s_{i}$ is the start of the overlap region and $e_{i}$ is the end of the overlap region. $i=1, 2, \cdots n$, where $n$ is the number of spacecraft in the swarm. 
    \item The FOV of a second spacecraft in the swarm is defined as 
    \begin{align}
       (s_{j}, e_{j})= (\theta_{j}- \nu, \theta_{j}+ \nu)
    \end{align}
    for $j=1, 2, \cdots n$ and $j \neq i$.
    \item With these defined, we are then able to define the boundaries in which the FOVs of two spacecraft overlap: 
    \begin{align}
        {\kappa_{start}}= \max(s_i, s_j)
        {\kappa_{end}}= \min(e_i, e_j).
    \end{align}
    \item Then, the overlap is defined as
    \begin{align}
        \kappa= max(0, \kappa_{end}- \kappa_{start}).
    \end{align}
    The overlap between a spacecraft and every other spacecraft in the system is added together. This is called $\kappa_{total}$. 
\end{enumerate}
\begin{remark}
If the orientation $\theta$ between two spacecraft is the same, the FOV of the spacecraft would be the same. Then, $\kappa=0$, which is false because they would have a $100\%$ overlap. In order to overcome this issue, a small positive number is added to $\theta$ of the second spacecraft. The rest of the algorithm stays as outlined above. 
\end{remark}

With both $\epsilon$ and $\kappa$ now defined, the information cost can be formulated as follows:
\begin{align}
    \label{I}
    \mathrm{I}= \kappa_{total}-\epsilon.
\end{align}
This cost reflects our objectives of maximizing the number of POIs that are seen by a spacecraft's FOV (given by $\epsilon$) and simultaneously minimizing the amount of overlap between every spacecraft in the swarm.
\subsection{Extensions to Stochastic Terminal Positions}
We can still use our cost function~\eqref{I} even when the terminal positions of the deputy spacecraft are stochastic and generated by some probability distributions. As implied earlier in Theorem~\ref{thm_hierarchical_contraction} of Sec.~\ref{sec_problem}, this could happen, \eg{}, when the deputy spacecraft actively use the stochastic information of the ISO state and also when the dynamics~\eqref{original_dyn} and~\eqref{original_dyn_i} are subject to stochastic process noise.   a

For example, if they are assumed to be normally distributed, we could instead use the expected information cost $\mathbb{E}[\mathrm{I}]$ instead of the stochastic information cost $\mathrm{I}$ of~\eqref{I}, through integration with the probability density function of the normal distribution. When this cost is minimized, the amount of visual information the spacecraft gains from the uncertainty ellipsoid is maximized. This maximizes visual information because, ideally, each spacecraft wants to maximize the amount of POIs it covers and minimize the amount of overlap between its FOV and the FOV of another spacecraft.

\section{Numerical Simulations}
The implementation and optimization of the cost function detailed in Sec.~\ref{Info:cost}, as well as any visualizations presented in this paper, are implemented in Python. Each spacecraft in the swarm is given an initial $x$, $y$, and $z$ position, as well as an initial $\theta$. The optimization decision variables $p_{\mathrm{SC}}$ and $\theta$ are optimized using Nelder-Mead for its robustness in optimizing non-differentiable cost functions~\cite{Nelder-Mead}. Bounds are placed on $\theta$ so it always remained within $[0, 2\pi]$ radians. 
\subsection{Simulation Setup and Results}
Two experiments are performed to observe the behavior of the optimal terminal positions of the spacecraft 1. to observe the probability of one spacecraft successfully viewing the ISO and 2. to determine the optimal number of spacecraft for the system. This latter experiment is also for analyzing the trade-off between the number of spacecraft and the information gained. The success criteria are maximizing the percentage of POIs viewed by the spacecraft system and minimizing the information cost. In both experiments, we use the uncertainty ellipsoid given by Theorem~\ref{thm_hierarchical_contraction} with equal $x$, $y$, and $z$ radii with 5,000 POIs randomly distributed within the sphere. We call this an uncertainty sphere in the following.
\subsubsection{Probability of single spacecraft viewing the ISO}
The probability that a single spacecraft is able to view an ISO with varying uncertainty sphere sizes is calculated, where the positions of candidate ISOs are taken from~\cite{my_iso_mission}. Twenty-four trials are performed for three differently sized uncertainty spheres for each of two ISO terminal positions. The initial positions for the spacecraft states are randomly generated integers at 100 to 600 units away from the origin of the uncertainty sphere. The probability of the spacecraft successfully viewing the ISO is taken using the following:
\begin{align}
    p= \frac{\text{number of times ISO is viewed}}{\text{number of trials}}
\end{align}

The ISO terminal positions used are $\allowbreak:(4.1784, -9.8402, \allowbreak-4.7133)\times 10^7$ km and $(1.8804, 2.2023, 1.0029)\times 10^8$ km. The probabilities that a spacecraft can successfully view the ISO after optimization are shown in Tables~\ref{tab:tab1}~and~\ref{tab:tab2}.
\begin{table}[htbp]
\scriptsize
\centering
\renewcommand{\arraystretch}{1.3}
\begin{tabular}{ c c c c }
\hline
\hline
Uncertainty Sphere Radius & 50 & 500 & 1,000 \\
\hline
$p$ (\%) & 41.7 & 33.3 & 8.3 \\
\hline
\hline
\end{tabular}
\caption{The probability of a single spacecraft viewing the origin of an uncertainty sphere with ISO terminal position 1, where $p$ denotes the probability.}
\label{tab:tab1}
\vspace{-3em}
\end{table}
\begin{table}[htbp]
\scriptsize
\centering
\renewcommand{\arraystretch}{1.3}
\begin{tabular}{ c c c c }
\hline
\hline
Uncertainty Sphere Radius & 50 & 500 & 1,000 \\
\hline
$p$ (\%) & 25.0 & 16.7 & 25.0 \\
\hline
\hline
\end{tabular}
\caption{The probability of a single spacecraft viewing the origin of an uncertainty sphere with ISO terminal position 2, where $p$ denotes the probability.}
\label{tab:tab2}
\vspace{-3em}
\end{table}

As implied in Theorem~\ref{thm_hierarchical_contraction}, in general, the probability that a single spacecraft can view an ISO in an uncertainty sphere of varying size decreases as the radius of the uncertainty sphere increases, making the randomly generated POIs more spread out. This trend is demonstrated in the optimization experiment performed on the first ISO terminal position given in Table~\ref{tab:tab1}. However, in the simulation with the second ISO terminal position in Table~\ref{tab:tab2}, the probability that the spacecraft can view the ISO remains the same in the case where the radius is 50 and where the radius is 1,000. These observations would indicate that, due to the nonlinearity of the problem, the optimal position of the spacecraft could be dependent on the initial conditions, which causes some of the spacecraft to be positioned in such a way that they are not able to see the ISO. Additionally, limiting the initial positions to a fixed distance away from the uncertainty sphere means that as the sphere gets larger, the initial position of the spacecraft is more likely to start inside the sphere. This means that the spacecraft must find a way to optimize its position while being inside the sphere, which is not accounted for in our information cost. 
\subsubsection{Determining Optimal Number of Spacecraft}
This simulation then focuses on determining an ideal number of spacecraft for viewing the POIs in the constructed uncertainty ellipsoid. The initial positions of the spacecraft are randomly determined a large distance away from the origin of the coordinate system, which is also the center of the ellipsoid. We randomly place the spacecraft around the ellipsoid in order to reduce any bias caused by a favorable position. The information cost optimization is then run with the constraints that the angle $\theta \in [0, 2\pi)$. The information cost optimization is run from 1 to 7 spacecraft on the same uncertainty ellipsoid with the same POIs. The trial is repeated three times. 

After computing the terminal positions for each number of spacecraft, the percentage of POIs viewed by the system is determined also as a metric for performance comparison in the manner described earlier. For each simulation, we examine the information cost value of each scenario. The results are given in Table~\ref{tab:trialdata}, \ref{tab:avginfo}, \ref{tab:avgpercent}.
\begin{figure}[htbp]
    \centering
    \includegraphics[width=1\linewidth]{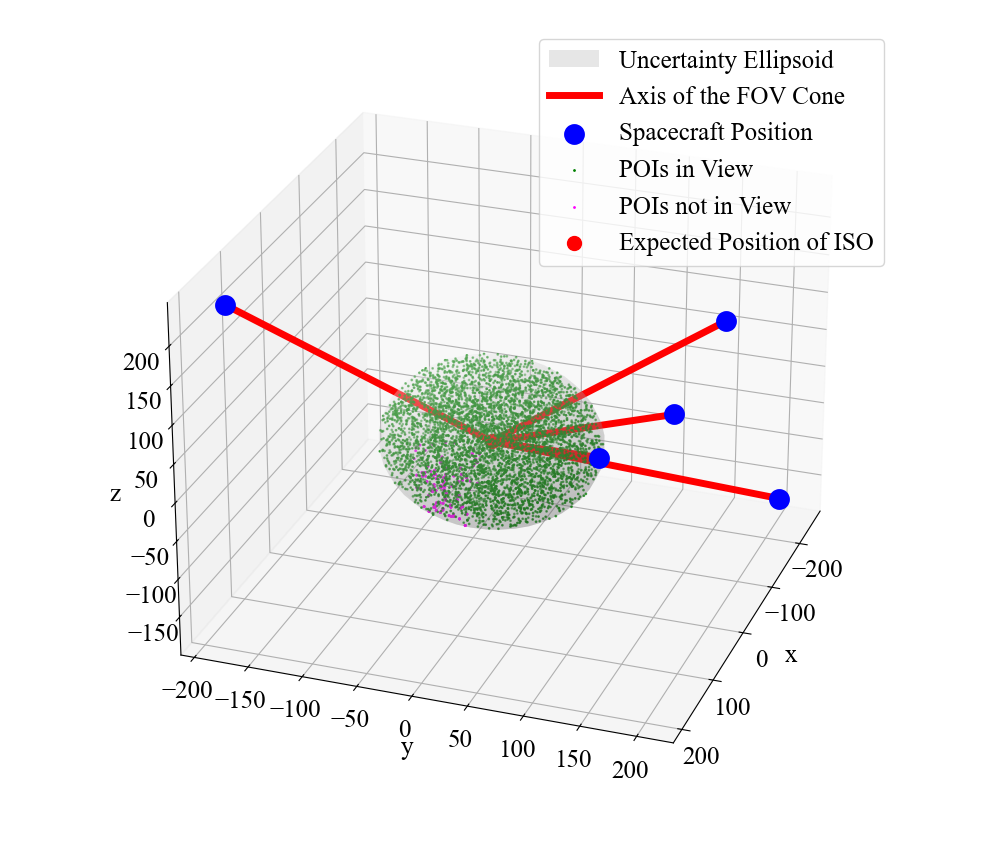}
    \caption{A diagram of the terminal positions of a five-spacecraft system and the POIs in view and not in view.}
    \label{fig:5-sc-diagram}
\end{figure}
\begin{table}[htbp]
\scriptsize
\centering
\renewcommand{\arraystretch}{1.3}
\begin{tabular}{ c c c c c c c c }
\hline
\hline
Trial \# & 1 & 2 & 3 & 4 & 5 & 6 & 7 \\
\hline
Trial 1 & 0.45 & 0.65 & 0.82 & 0.93 & 0.95 & 1.00 & 1.00 \\
Trial 2 & 0.50 & 0.77 & 0.98 & 1.00 & 0.98 & 0.99 & 1.00 \\
Trial 3 & 0.50 & 0.64 & 0.72 & 0.51 & 0.85 & 0.90 & 0.96 \\
\hline
\hline
\end{tabular}
\caption{Percentage of 5,000 POIs observed with respect to the number of spacecraft across the three trials.}
\label{tab:trialdata}
\end{table}

Table~\ref{tab:trialdata} implies that, locally, the optimal number of spacecraft for an uncertainty sphere of radius 100 units is five spacecraft. The system views 92.740\% of the POIs across three trials while producing the lowest information cost in the simulations.
All trials follow a similar quasi-logarithmic increase in coverage,  converging at 100\% observation in Table~\ref{tab:trialdata}. Trial 3's four-spacecraft system has a significantly reduced coverage of the POIs, which could be due to unfavorable random initial conditions. All other trials have a similar trend. Also, as indicated in Table~\ref{tab:avgpercent}, as the number of spacecraft increases, the percentage of POIs observed also increases; however, this does not correlate to lower information costs. As shown in Table~\ref{tab:avginfo}, the information cost increases significantly after three trials due to the increased overlap between the spacecraft FOVs. This directly leads to a sharp increase in the information cost. 

\begin{table}[htbp]
\scriptsize
\centering
\renewcommand{\arraystretch}{1.3}
\begin{tabular}{ c c c c c c c c }
\hline
\hline
\# S/C & 1 & 2 & 3 & 4 & 5 & 6 & 7 \\
\hline
-$I$  & $95.1$ & $95.3$ & $96.6$ & $95.6$ & $96.1$ & $82.9$ & $69.0$ \\
\hline
\hline
\end{tabular}
`   \caption{The average information cost with respect to the number of spacecraft across the three trials where \# S/C is the number of spacecraft in the system and $-I$ is the negative information cost.}
\label{tab:avginfo}
\end{table}

\begin{table}[htbp]
\scriptsize
\centering
\renewcommand{\arraystretch}{1.3}
\begin{tabular}{ c c c c c c c c }
\hline
\hline
\# S/C & 1 & 2 & 3 & 4 & 5 & 6 & 7 \\
\hline
$p$ (\%) & $50.2$ & $68.6$ & $83.9$ & $81.6$ & $92.7$ & $96.7$ & $98.6$ \\
\hline
\hline
\end{tabular}
\caption{The average percentage of 5000 POIs viewed by the system, where \# S/C denotes the number of spacecraft and $p$ denotes the percentage of POIs observed.}
\label{tab:avgpercent}
\end{table}

This experiment displays the benefits of a multi-spacecraft system due to its increased observation of the number of POIs and gives insight into trade-offs of more spacecraft in the system. The objective of the simulation is not to prove that a five-spacecraft system is the ideal configuration for a flyby mission but to empirically show the potential of a multi-spacecraft system in the ISO encounter, with a saturation point in which an excess of spacecraft could lead to a waste of resources. There could be a necessity to have a greater number of spacecraft for an uncertainty ellipsoid of greater radius, and the results presented here imply that multi-spacecraft systems could perform better than a singular spacecraft in gathering valuable scientific data during the encounter. 
\section{Conclusion}
This paper presents a multi-spacecraft framework for optimizing information gain during ISO encounters, addressing the challenges posed by the large state uncertainty. By leveraging hierarchical stochastic contraction, we first derive an expected spacecraft delivery error bound for constructing an uncertainty ellipsoid, where the ISO would be located with a finite probability. We then design a novel cost function to maximize the visual information gained from an ISO by a swarm of spacecraft, explicitly considering the spacecraft specifications. Numerical simulations are performed for different uncertainty ellipsoids and different numbers of spacecraft in the swarm. It can be observed that each spacecraft in the swarm locally maximizes the amount of the uncertainty ellipsoid it views while minimizing the amount of overlap between its FOV and the FOV of other spacecraft in the swarm. ISOs follow a hyperbolic trajectory and will never re-enter the solar system following their exit. Our proposed approach could help us gain as much visual information from them as possible through such rare events, thereby providing additional tools for furthering the understanding of deep-space celestial objects and interstellar space. 



\subsection{Potential Future Work}
This project can be extended by creating optimal orbits for each of the spacecraft in the swarm to orbit the ISO's uncertainty ellipsoid. Following the ISO for an extended period would enable the swarm to gain information from many perspectives. Also, further work can be done to implement this cost function in a swarm of spacecraft that is observing multiple ISOs. There is a chance that multiple ISOs enter the solar system at similar times or that an ISO may splinter into two or more different parts. Because of this, the swarm must be able to optimally capture visual data from these multiple ISOs. Finally, the cost function presented in this paper could be extended to flyby missions involving asteroids, comets, or other celestial bodies. Future studies would include comparing the results of optimization using this algorithm of ISOs and other celestial bodies. 
\acknowledgments
Part of the research was carried out at the Jet Propulsion Laboratory, California Institute of Technology, under a contract with the National Aeronautics and Space Administration.
\bibliographystyle{IEEEtran}
\bibliography{bib.bib}

\begin{thebibliography}{10}
\providecommand{\url}[1]{#1}
\csname url@samestyle\endcsname
\providecommand{\newblock}{\relax}
\providecommand{\bibinfo}[2]{#2}
\providecommand{\BIBentrySTDinterwordspacing}{\spaceskip=0pt\relax}
\providecommand{\BIBentryALTinterwordstretchfactor}{4}
\providecommand{\BIBentryALTinterwordspacing}{\spaceskip=\fontdimen2\font plus
\BIBentryALTinterwordstretchfactor\fontdimen3\font minus \fontdimen4\font\relax}
\providecommand{\BIBforeignlanguage}[2]{{%
\expandafter\ifx\csname l@#1\endcsname\relax
\typeout{** WARNING: IEEEtran.bst: No hyphenation pattern has been}%
\typeout{** loaded for the language `#1'. Using the pattern for}%
\typeout{** the default language instead.}%
\else
\language=\csname l@#1\endcsname
\fi
#2}}
\providecommand{\BIBdecl}{\relax}
\BIBdecl

\bibitem{bannister_natural_2019}
M.~T. Bannister, A.~Bhandare, P.~A. Dybczyński, A.~Fitzsimmons, A.~Guilbert-Lepoutre, R.~Jedicke, M.~M. Knight, K.~J. Meech, A.~McNeill, S.~Pfalzner, S.~N. Raymond, C.~Snodgrass, D.~E. Trilling, Q.~Ye, and {The `Oumuamua ISSI Team}, ``\href{https://www.nature.com/articles/s41550-019-0816-x}{The natural history of {`Oumuamua}},'' \emph{Nature Astron.}, vol.~3, no.~7, pp. 594--602, 2019.

\bibitem{oumuamua}
K.~J. Meech, R.~Weryk, M.~Micheli, J.~T. Kleyna, O.~R. Hainaut, R.~Jedicke, R.~J. Wainscoat, K.~C. Chambers, J.~V. Keane, A.~Petric, L.~Denneau, E.~Magnier, T.~Berger, M.~E. Huber, H.~Flewelling, C.~Waters, E.~Schunova-Lilly, and S.~Chastel, ``\href{https://www.nature.com/articles/nature25020}{{A brief visit from a red and extremely elongated interstellar asteroid}},'' \emph{Nature}, vol. 552, no. 7685, pp. 378--381, Dec. 2017.

\bibitem{guzick-2020}
P.~Guzik, M.~Drahus, K.~Rusek, W.~Waniak, G.~Cannizzaro, and I.~Pastor-Marazuela, ``\href{https://www.nature.com/articles/s41550-019-0931-8}{Initial characterization of interstellar comet 2I/Borisov},'' \emph{Nature Astron.}, vol.~4, 2020.

\bibitem{Seligman_2021}
D.~Z. Seligman, W.~G. Levine, S.~H.~C. Cabot, G.~Laughlin, and K.~Meech, ``\href{https://dx.doi.org/10.3847/1538-4357/ac1594}{On the spin dynamics of elongated minor bodies with applications to a possible solar system analogue composition for `Oumuamua},'' \emph{Astrophys. J.}, vol. 920, no.~1, p.~28, Oct. 2021.

\bibitem{dybczynski_investigating_2018}
P.~A. Dybczyński and M.~Królikowska, ``\href{https://www.aanda.org/articles/aa/abs/2018/02/aa32309-17/aa32309-17.html}{Investigating the dynamical history of the interstellar object {`Oumuamua}},'' \emph{Astron. Astrophys.}, vol. 610, p. L11, 2018.

\bibitem{my_iso_mission}
B.~P.~S. Donitz, D.~Mages, H.~Tsukamoto, P.~Dixon, D.~Landau, S.-J. Chung, E.~Bufanda, M.~Ingham, and J.~Castillo-Rogez, ``\href{https://ieeexplore.ieee.org/document/10115554}{Interstellar object accessibility and mission design},'' in \emph{IEEE Aerosp. Conf.}, 2023, pp. 1--9.

\bibitem{myiso}
H.~Tsukamoto, S.-J. Chung, Y.~K. Nakka, B.~Donitz, D.~Mages, and M.~Ingham, ``\href{https://arxiv.org/abs/2208.04883}{{{Neural-Rendezvous}: Provably-based robust guidance and control to encounter interstellar objects}},'' \emph{accepted, AIAA J. Guid. Control Dyn.}, Aug. 2024.

\bibitem{lvlhbook}
J.~L. Junkins and H.~Schaub, \emph{\href{https://arc.aiaa.org/doi/book/10.2514/4.105210}{{Analytical Mechanics of Space Systems}}}, 2nd~ed.\hskip 1em plus 0.5em minus 0.4em\relax {AIAA}, Apr. 2018.

\bibitem{declan_nav}
D.~Mages, D.~Landau, B.~Donitz, and S.~Bhaskaran, ``\href{https://www.sciencedirect.com/science/article/pii/S0094576521006081}{{Navigation evaluation for fast interstellar object flybys}},'' \emph{Acta Astronautica}, vol. 191, pp. 359--373, 2022.

\bibitem{autonav}
J.~Riedel \emph{et~al.}, ``\href{https://trs.jpl.nasa.gov/handle/2014/41687}{{Autonomous optical navigation ({AutoNav}) {DS1} technology validation report}},'' in \emph{NASA JPL, Caltech}, 2000.

\bibitem{Engelhardt_2017}
T.~Engelhardt, R.~Jedicke, P.~Vere{\v{s}}, A.~Fitzsimmons, L.~Denneau, E.~Beshore, and B.~Meinke, ``\href{https://iopscience.iop.org/article/10.3847/1538-3881/aa5c8a}{{An observational upper limit on the interstellar number density of asteroids and comets}},'' \emph{Astron. J.}, vol. 153, no.~3, p. 133, feb 2017.

\bibitem{Attitude-Control-SC-Bodies}
R.~T. Nallapu and J.~Thangavelautham, ``\href{https://ieeexplore.ieee.org/document/8742116}{Attitude control of spacecraft swarms for visual mapping of planetary bodies},'' \emph{IEEE Aerosp. Conf.}, 2019.

\bibitem{Distributed-fast-MP}
S.~Bandyopadhyay, F.~Baldini, R.~Foust, S.-J. Chung, A.~Rahmani, J.-P. de~la Croix, and F.~Y. Hadaegh, ``\href{https://authors.library.caltech.edu/records/kt4zj-a7s05/preview/IWSCFF_FMP_multiagent_v4.pdf}{Distributed fast motion planning for spacecraft swarms in cluttered environments using spherical expansions and sequence of convex optimization problems},'' in \emph{IWSCFF}, 2017.

\bibitem{Ref:Info-based-GandC}
Y.~K. Nakka, W.~H\"{o}nig, C.~Choi, A.~Harvard, A.~Rahmani, and S.-J. Chung, ``\href{https://doi.org/10.2514/1.G006278}{Information-based guidance and control architecture for multi-spacecraft on-orbit inspection},'' \emph{J. Guid. Control Dyn.}, vol.~45, no.~7, pp. 1184--1201, 2022.

\bibitem{Ref:Coord-Motion-Planning}
B.~Bernhard, C.~Choi, A.~Rahmani, S.-J. Chung, and F.~Hadaegh, ``\href{https://ieeexplore.ieee.org/document/9172747}{Coordinated motion planning for on-orbit satellite inspection using a swarm of small-spacecraft},'' \emph{IEEE Aerosp. Conf.}, pp. 1--13, 2020.

\bibitem{Ref:Attitude-control}
J.~T. Ravi~teja Nallapu, ``\href{https://arxiv.org/pdf/1901.04088 }{{Attitude control of spacecraft swarms for visual mapping of planetary bodies}},'' \emph{IEEE Aerosp. Conf.}, 2020.

\bibitem{Ref:Complex-Space-Structures}
S.~Faghihi, S.~Tavana, and A.~H. de~Ruiter, ``\href{https://www.sciencedirect.com/science/article/abs/pii/S0094576522005550?via%3Dihub }{{Multiple spacecraft coordination and motion planning for full-coverage inspection of large complex space structures}},'' \emph{Acta Astronautica}, 2023.

\bibitem{doi:10.2514/1.G000218}
D.~Morgan, S.-J. Chung, and F.~Y. Hadaegh, ``\href{https://arc.aiaa.org/doi/10.2514/1.G000218}{{Model predictive control of swarms of spacecraft using sequential convex programming}},'' \emph{J. Guid. Control Dyn.}, vol.~37, no.~6, pp. 1725--1740, 2014.

\bibitem{doi:10.2514/1.55705}
D.~Morgan, S.-J. Chung, L.~Blackmore, B.~Acikmese, D.~Bayard, and F.~Y. Hadaegh, ``\href{https://arc.aiaa.org/doi/10.2514/1.55705}{{Swarm-keeping strategies for spacecraft under} $J_2$ and {atmospheric drag perturbations}},'' \emph{J. Guid. Control Dyn.}, vol.~35, no.~5, pp. 1492--1506, 2012.

\bibitem{doi:10.2514/1.37261}
S.-J. Chung, U.~Ahsun, and J.-J.~E. Slotine, ``\href{https://doi.org/10.2514/1.37261}{{Application of synchronization to formation flying spacecraft: {Lagrangian} approach}},'' \emph{J. Guid. Control Dyn.}, vol.~32, no.~2, pp. 512--526, 2009.

\bibitem{arnold_SDE}
L.~Arnold, \emph{\href{https://epubs.siam.org/doi/10.1137/1018036?mobileUi=0}{{Stochastic Differential Equations: {Theory} and Applications}}}.\hskip 1em plus 0.5em minus 0.4em\relax Wiley, 1974.

\bibitem{Ref:contraction1}
W.~Lohmiller and J.-J.~E. Slotine, ``\href{https://www.sciencedirect.com/science/article/pii/S0005109898000193}{{On contraction analysis for nonlinear systems}},'' \emph{Automatica}, vol.~34, no.~6, pp. 683 -- 696, 1998.

\bibitem{Pham2009}
Q.~Pham, N.~Tabareau, and J.-J.~E. Slotine, ``\href{https://ieeexplore.ieee.org/document/4806161}{{A contraction theory approach to stochastic incremental stability}},'' \emph{IEEE Trans. Autom. Control}, vol.~54, no.~4, pp. 816--820, Apr. 2009.

\bibitem{miyato2018spectral}
T.~Miyato, T.~Kataoka, M.~Koyama, and Y.~Yoshida, ``\href{https://openreview.net/forum?id=B1QRgziT-}{{Spectral normalization for generative adversarial networks}},'' in \emph{Int. Conf. Learn. Representations}, 2018.

\bibitem{Ref:contraction3}
W.~Wang and J.-J.~E. Slotine, ``\href{https://pubmed.ncbi.nlm.nih.gov/15650898/}{{On partial contraction analysis for coupled nonlinear oscillators}},'' \emph{Biol. Cybern.}, vol.~92, no.~1, pp. 38--53, Jan. 2005.

\bibitem{tutorial}
H.~Tsukamoto, S.-J. Chung, and J.-J.~E. Slotine, ``\href{https://www.sciencedirect.com/science/article/pii/S1367578821000766?via\%3Dihub}{{Contraction theory for nonlinear stability analysis and learning-based control: {A} tutorial overview}},'' \emph{Annu. Rev. Control}, vol.~52, pp. 135--169, 2021.

\bibitem{sto_stability_book}
H.~J. Kushner, \emph{\BIBforeignlanguage{English}{\href{https://www.elsevier.com/books/stochastic-stability-and-control/kushner/978-0-12-430150-4}{{Stochastic Stability and Control}}}}.\hskip 1em plus 0.5em minus 0.4em\relax Academic Press New York, 1967.

\bibitem{probbook}
D.~S. Geoffrey R.~Grimmett, \emph{\href{https://books.google.com/books/about/Probability_and_Random_Processes.html?id=G3ig-0M4wSIC}{{Probability and Random Processes}}}, 3rd~ed.\hskip 1em plus 0.5em minus 0.4em\relax United Kingdom: Oxford University Press, 2001.

\bibitem{Ref:Stochastic}
A.~P. Dani, S.-J. Chung, and S.~Hutchinson, ``\href{https://ieeexplore.ieee.org/document/6899639}{{Observer design for stochastic nonlinear systems via contraction-based incremental stability}},'' \emph{IEEE Trans. Autom. Control}, vol.~60, no.~3, pp. 700--714, Mar. 2015.

\bibitem{Ref:phasesync}
S.-J. Chung, S.~Bandyopadhyay, I.~Chang, and F.~Y. Hadaegh, ``\href{https://www.sciencedirect.com/science/article/pii/S0005109813000496}{{Phase synchronization control of complex networks of {L}agrangian systems on adaptive digraphs}},'' \emph{Automatica}, vol.~49, no.~5, pp. 1148--1161, 2013.

\bibitem{cone_of_vision}
H.~Strasburger, I.~Rentschler, and M.~Jüttner, ``{\href{{https://doi.org/10.1167/11.5.13}}{Peripheral vision and pattern recognition: A review}},'' \emph{J. Vision}, vol.~11, no.~5, pp. 13--13, 12 2011.

\bibitem{emery_chapter_2017}
W.~Emery and A.~Camps, ``\href{https://www.sciencedirect.com/science/article/pii/B9780128092545000038}{Chapter 3 -- Optical Imaging Systems},'' in \emph{Introduction to Satellite Remote Sensing}, W.~Emery and A.~Camps, Eds.\hskip 1em plus 0.5em minus 0.4em\relax Elsevier, 2017, pp. 85--130.

\bibitem{Nelder-Mead}
A.~Maia, F.~E., M.~Oliveira, L.~Menezes, and A.~Andrade-Campos, ``{\href{{https://www.sciencedirect.com/science/article/pii/B9780857094810000033}}{Numerical optimization strategies for springback compensation in sheet metal forming}},'' in \emph{Computat. Methods. Prod. Eng.}, 2017.

\end{thebibliography}
\thebiography
\vspace{-1em}
\begin{biographywithpic}
{Arna Bhardwaj}{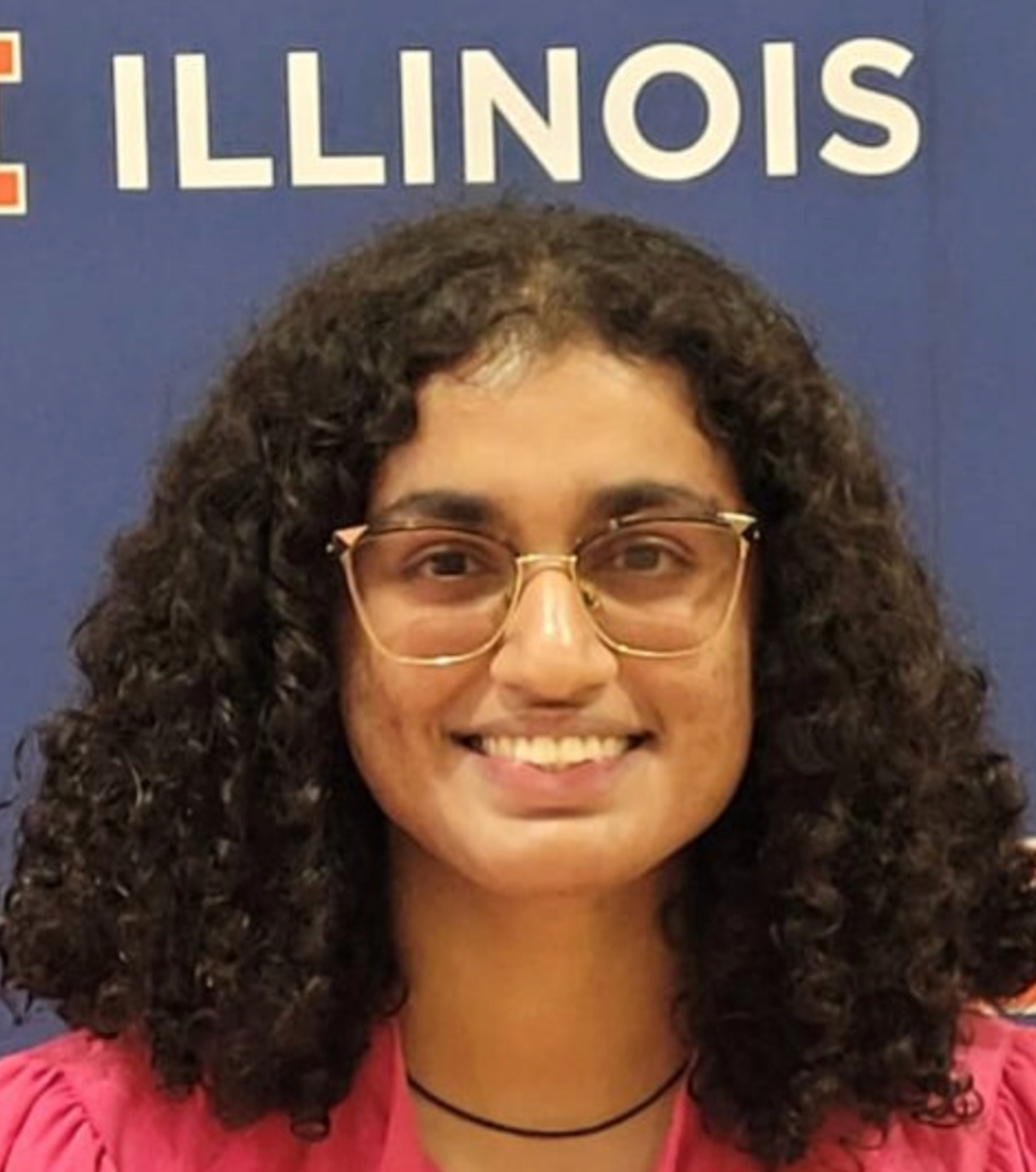}
will receive her B.S. in Aerospace Engineering with a Minor in Statistics from the University of Illinois at Urbana-Champaign in May 2025. She is interested in both perception and estimation for stochastic systems. Throughout her four years at Illinois, Arna has been a part of four different research labs and spent two summers as a NASA Space Grant Consortium researcher under their Undergraduate Research Opportunities Program. She hopes to pursue a PhD in Aerospace Engineering starting in Fall 2025 and pursue her goal of developing robots to aid in human exploration of extraterrestrial domains. 
\end{biographywithpic} 

\begin{biographywithpic}
{Shishir Bhatta}{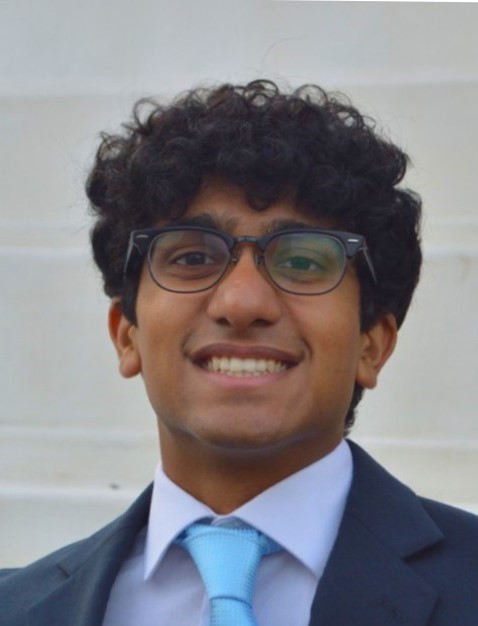}
is studying for his B.S. in Aerospace Engineering with a minor in Computer Science at the University of Illinois Urbana-Champaign (UIUC), expecting to graduate in 2026. He is interested in the cross-section of guidance, navigation and controls (GNC), and flight software, exploring these topics through his academics and extracurricular activities. He has previously worked with the Advanced Controls Research Lab at UIUC, developing a gradient-descent-based framework to tune controller parameters. He previously interned at AstroForge as a Flight Software Intern, exploring his interests in flight software. He is currently working with the ACXIS Lab as an undergraduate researcher. 

\end{biographywithpic}

\begin{biographywithpic}
{Hiroyasu Tsukamoto}{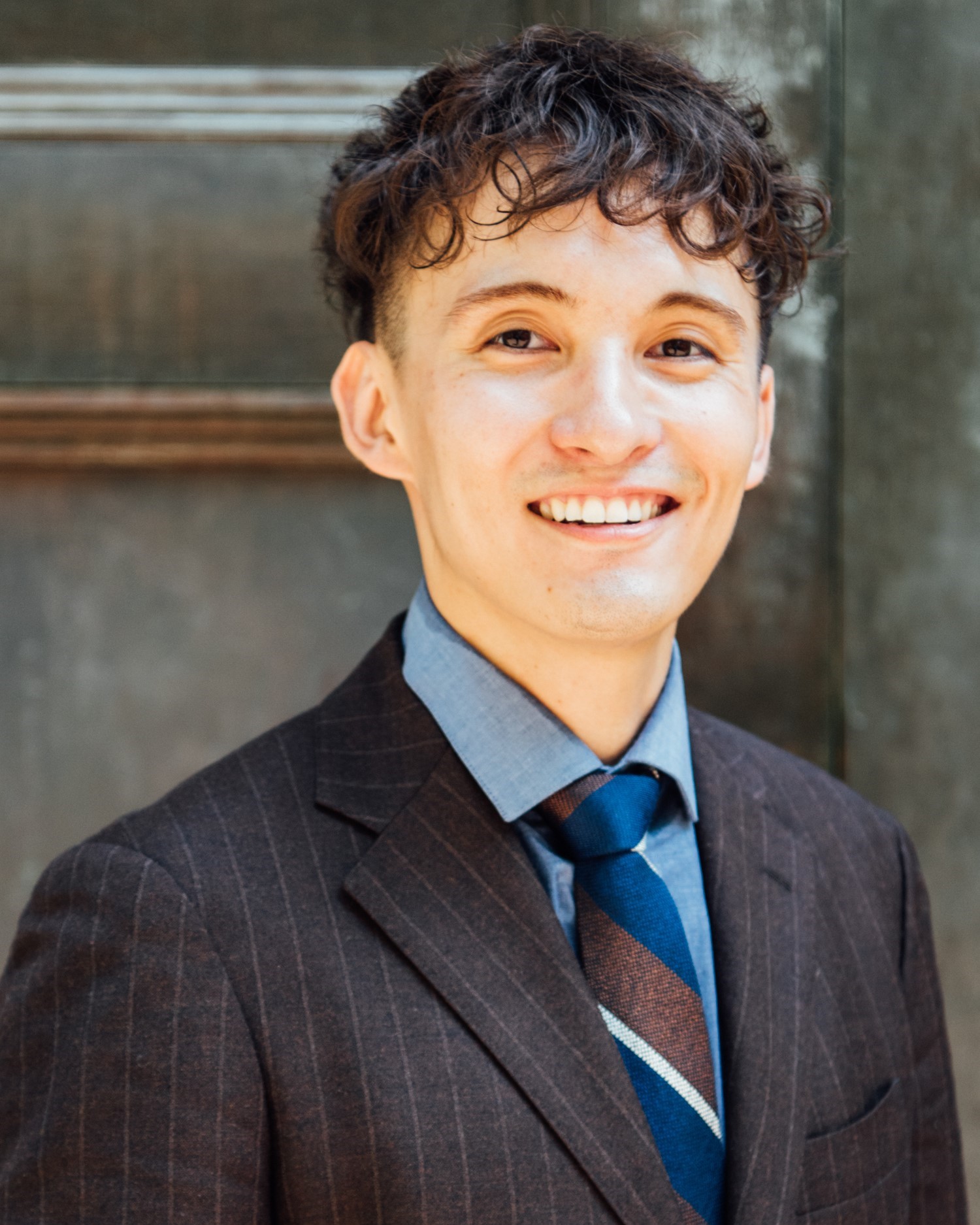} is an Assistant Professor of Aerospace Engineering at the University of Illinois at Urbana-Champaign and the director of the ACXIS Laboratory (Autonomous Control, Exploration, Intelligence, and Systems). Prior to joining Illinois, he was a Postdoctoral Research Affiliate in Robotics at the NASA Jet Propulsion Laboratory, where he contributed to the Science-Infused Spacecraft Autonomy for Interstellar Object Exploration and Multi-Spacecraft Autonomy Technology Demonstration projects. He received his Ph.D. and M.S. in Space Engineering (Autonomous Robotics and Control) from Caltech in 2018 and 2023, respectively, and his B.S. degree in Aeronautics and Astronautics from Kyoto University, Japan, in 2017. He is the recipient of several awards, including the William F. Ballhaus Prize for the Best Doctoral Dissertation in Space Engineering at Caltech and the Innovators Under 35 Japan Award from MIT Technology Review. More info: \myhref{https://hirotsukamoto.com}{https://hirotsukamoto.com}.
\end{biographywithpic}

\end{document}